\documentclass[letterpaper,10pt,conference]{ieeeconf}

\IEEEoverridecommandlockouts
\usepackage{graphicx}

\usepackage{dblfloatfix}

\usepackage{placeins}

\usepackage{amsmath}
\usepackage{amssymb}
\usepackage{amsfonts}

\usepackage{amsthm}

\usepackage{mathtools}
\usepackage{bm}

\usepackage{booktabs}
\usepackage{makecell}

\usepackage{algorithm}

\usepackage{algpseudocode}

\usepackage{listings}

\usepackage{xcolor}

\usepackage{etoolbox}

\usepackage{cite}

\usepackage{todonotes}

\usepackage[hidelinks]{hyperref}

\newcommand{\MethodFigureWidth}{\linewidth}

\allowdisplaybreaks
\makeatletter
\renewcommand{\@makecaption}[2]{%
  \ifx\@captype\@IEEEtablestring
    \begin{center}{\small #1}\\{\small\scshape #2}\end{center}%
    \@IEEEtablecaptionsepspace
  \else
    \@IEEEfigurecaptionsepspace
    \setbox\@tempboxa\hbox{\small #1.~~ #2}%
    \ifdim\wd\@tempboxa>\hsize
      \setbox\@tempboxa\hbox{\small #1.~~ }%
      \parbox[t]{\hsize}{\small\noindent\unhbox\@tempboxa#2}%
    \else
      \hbox to\hsize{\small\box\@tempboxa\hfil}%
    \fi
  \fi}
\makeatother
\newtheorem{proposition}{Proposition}

\theoremstyle{definition}

\newtheorem{assumption}{Assumption}

\theoremstyle{remark}
\newtheorem{remark}{Remark}

\BeforeBeginEnvironment{proposition}{\vspace{5pt}}
\AfterEndEnvironment{proof}{\vspace{-4pt}}

\makeatletter
\g@addto@macro\@floatboxreset{\normalcolor}
\makeatother

\definecolor{codegreen}{rgb}{0,0.6,0}
\definecolor{codegray}{rgb}{0.5,0.5,0.5}
\definecolor{codepurple}{rgb}{0.58,0,0.82}
\definecolor{backcolour}{rgb}{0.95,0.95,0.92}

\lstdefinestyle{mystyle}{
    backgroundcolor=\color{backcolour},
    commentstyle=\color{codegreen},
    keywordstyle=\color{magenta},
    numberstyle=\tiny\color{codegray},
    stringstyle=\color{codepurple},
    basicstyle=\ttfamily\footnotesize,
    breakatwhitespace=false,
    breaklines=true,
    captionpos=b,
    keepspaces=true,
    numbers=left,
    numbersep=5pt,
    showspaces=false,
    showstringspaces=false,
    showtabs=false,
    tabsize=2
}

\title{\LARGE \bf
SD-DPC: Sparse Dictionary Differentiable Predictive Control
}

\author{
Ali Reza Daneshvar Garmroodi$^{1}$ and Jan Drgo\v{n}a$^{2}$%
\thanks{*This research is supported by the U.S. DOE, Office of Science, ASCR program under the Scientific Discovery through Advanced Computing (SciDAC) Institute “LEADS: LEarning-Accelerated Domain Science”. }%
\thanks{$^{1}$Concordia University, Montreal, QC H3G1M8, Canada.
{\tt\small alireza.daneshvargarmroodi@mail.concordia.ca}}%
\thanks{$^{2}$Johns Hopkins University, Baltimore, Maryland 21218, USA.
{\tt\small jdrgona1@jh.edu}}%
}

\begin{document}
\bstctlcite{IEEEexample:BSTcontrol}
\maketitle
\thispagestyle{empty}
\pagestyle{empty}

%%%%%%%%%%%%%%%%%%%%%%%%%%%%%%%%%%%%%%%%%%%%%%%%%%%%%%%%%%%%%%%%%%%%%%%%%%%%%%%%
% ABSTRACT
%%%%%%%%%%%%%%%%%%%%%%%%%%%%%%%%%%%%%%%%%%%%%%%%%%%%%%%%%%%%%%%%%%%%%%%%%%%%%%%%

\begin{abstract}
We present sparse dictionary differentiable predictive control (SD-DPC), a
framework for learning sparse, interpretable feedback policies for nonlinear
systems from data. A prediction model is first identified by rollout-based
sparse identification of nonlinear dynamics (SINDy), building on gradient-based
and multistep formulations. The policy is then parameterized as a sparse
combination of dictionary functions and trained by differentiating a
constrained finite-horizon predictive-control objective through this model, so
that its terms are selected by closed-loop performance rather than by
imitating a previously trained controller. The result is an explicit feedback
law with only a handful of terms. Across three benchmark control problems, SD-DPC satisfies the constraints in {almost} all test scenarios, {outperforms a policy distilled onto the same terms by up to an order of magnitude, and requires orders of magnitude less memory and
online computation than an optimization benchmark,} while {admitting explicit
sensitivity bounds}.
\end{abstract}

%%%%%%%%%%%%%%%%%%%%%%%%%%%%%%%%%%%%%%%%%%%%%%%%%%%%%%%%%%%%%%%%%%%%%%%%%%%%%%%%
% INTRODUCTION
%%%%%%%%%%%%%%%%%%%%%%%%%%%%%%%%%%%%%%%%%%%%%%%%%%%%%%%%%%%%%%%%%%%%%%%%%%%%%%%%

\section{INTRODUCTION}

Model predictive control (MPC) optimizes closed-loop performance under state
and input constraints, but requires solving a constrained optimization problem
at every sampling instant \cite{GarciaPrett1989,Mayne2000,Schwenzer2021},
which can be prohibitive for fast nonlinear systems or restricted hardware.
Explicit MPC moves the optimization offline, although the number of regions
describing the solution can grow rapidly with the problem dimension
\cite{Bemporad2002}. Learned policies scale better, either by approximating a
precomputed MPC law~\cite{Hertneck2018,Karg2020,LiHuaCao2022} or by directly minimizing a predictive-control objective
through differentiable closed-loop rollouts
\cite{Drgona2022,Drgona2024}. The latter
approach, differentiable predictive control (DPC), yields an explicit policy
that is inexpensive to evaluate online. However, these learned policies are commonly
deep neural networks, which distribute the control law across many coupled
parameters and are difficult to interpret, verify, modify, and adapt after deployment
\cite{Hose2024}.

\begin{figure}[!t]
    \centering
    \includegraphics[width=\MethodFigureWidth]{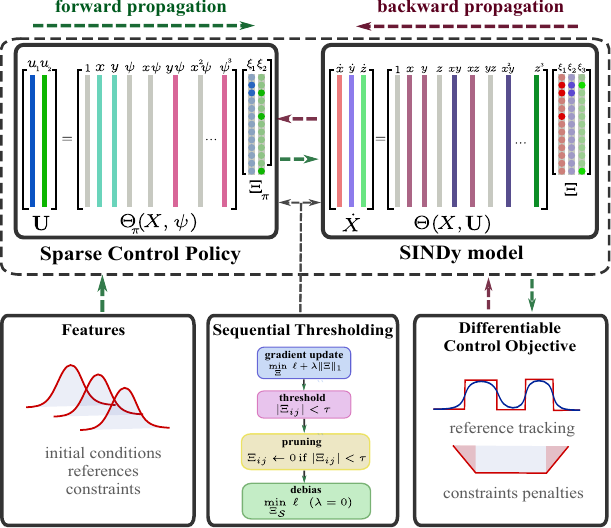}
    \caption{Pipeline: sparse system identification and differentiable
    predictive control policy learning.}
    \label{fig:sindy-dpc}
\end{figure}

Sparse identification of nonlinear dynamics (SINDy) represents system dynamics
as a sparse combination of candidate functions \cite{Brunton2016a}. Extensions
to controlled systems enable compact models for model-based control
\cite{brunton2016sparse, Kaiser2018}, while ensemble, weak-form, gradient-based, and
trajectory-based formulations improve robustness to noise and avoid pointwise
derivative estimates
\cite{Messenger2021,Fasel2022,Champion2019,Kaheman2022,ZhangTangHe2022}. {The same dictionary structure can also represent a feedback law, which is
the idea pursued here.}

Sparse and symbolic representations have also been used for interpretable
control policies. One line of work optimizes the policy structure through
model-free reinforcement learning \cite{Landajuela2021,Guo2023,Botteghi2024},
where constraints enter only through the reward. Another trains a neural policy
first and then fits a sparse policy to its actions \cite{Zolman2025}, so the
result is selected for agreement with the teacher rather than for closed-loop
performance. Sparse polynomial feedback laws have also been learned by
minimizing an averaged closed-loop cost with an $\ell_1$ penalty
\cite{Kunisch2023}, but for known control-affine dynamics with a constant input
matrix and without state or input constraints. To our knowledge, the terms of
an explicit sparse policy have not been selected under a constrained
finite-horizon predictive-control objective evaluated through a model
identified from data.

{In this work, we train a sparse dictionary policy directly through DPC.
The prediction model and the feedback law are both sparse combinations of
candidate functions, learned by the same sequentially thresholded gradient
procedure, so the policy terms are selected by the constrained closed-loop
objective itself rather than by an intermediate imitation step. The
contributions of this paper are as follows.}
{\begin{itemize}\setlength{\itemsep}{1pt}\setlength{\parskip}{0pt}
\item \textbf{Closed-loop sparse policy learning framework:} SD-DPC represents both the prediction
model and the feedback policy as sparse dictionary expansions and, to our
knowledge, is the first to select the policy terms under a constrained
predictive-control objective through a model identified from data
\cite{Champion2019,Kaheman2022,ZhangTangHe2022}.
\item \textbf{Structural guarantees:} we show finite termination of the
thresholded training with a bound on the cost of each prune, and derive a
Lipschitz bound of the policy from its retained coefficients that bounds the
closed-loop deviation caused by pruning or model mismatch.
\item \textbf{Closed-loop validation:} on three benchmarks evaluated on the
true plants, SD-DPC satisfies {almost} all constraints {with orders of magnitude
less memory and evaluation time than online optimization}, whereas distilling
a deep neural policy onto the same terms increases the tracking error {by up to
an order of magnitude} and violates the obstacle constraint.
\end{itemize}}

The remainder of the paper is organized as follows. Section~\ref{sec:background}
reviews SINDy and DPC, Section~\ref{sec:sparsedpc} develops SD-DPC{, its
training algorithm, and its structural properties}, Section~\ref{sec:experiments} details the case studies,
Section~\ref{sec:results} presents the results, and
Section~\ref{sec:conclusions} concludes.

\section{Background}
\label{sec:background}

\subsection{Sparse Identification of Nonlinear Dynamics}
\label{subsec:sindy_prelim}

Consider the nonlinear controlled system $\dot{x}(t)=f(x(t),u(t))$ with state
$x\in\mathbb{R}^{n_x}$ and input $u\in\mathbb{R}^{n_u}$, where $f$ is unknown
but trajectory data are available. Let $X\in\mathbb{R}^{{n_s}\times n_x}$
and $U\in\mathbb{R}^{{n_s}\times n_u}$ collect ${n_s}$ state and input
samples, and let $\dot{X}\in\mathbb{R}^{{n_s}\times n_x}$ contain the
corresponding state derivatives, measured or estimated by numerical
differentiation. SINDy approximates $f$ by a small number of candidate
functions collected in a library
\begin{equation}
\Theta(X,U)=\big[\theta_1(X,U)\ \cdots\ \theta_p(X,U)\big]\in\mathbb{R}^{{n_s}\times p},
\label{eq:sindy_library}
\end{equation}
typically polynomial, trigonometric, and state--input interaction terms, so
that $\dot{X}\approx\Theta(X,U)\Xi$ with $\Xi\in\mathbb{R}^{p\times n_x}$. The
coefficients follow from the sparsity-promoting regression
\begin{equation}
\min_{\Xi}\ \tfrac{1}{2}\big\|\dot{X}-\Theta(X,U)\Xi\big\|_F^2+\lambda\,\mathcal{R}(\Xi),
\label{eq:sindy_optimization}
\end{equation}
where $\mathcal{R}$ promotes sparsity and $\lambda>0$ weights it. In the
standard formulation this is solved by sequentially thresholded least squares:
a least-squares fit, removal of coefficients below a threshold, and refitting
on the retained terms, repeated until the retained coefficient set no longer
changes.

\subsection{Differentiable Predictive Control}
\label{subsec:dpc_prelim}

{DPC} learns an explicit feedback law offline by treating the constrained
optimal-control problem as an unsupervised learning problem
\cite{Drgona2022,Drgona2024}. The prediction model is held fixed while the
policy parameters are optimized through finite-horizon closed-loop
rollouts. Let $F$ denote the discrete-time prediction map induced by the
model and the integration scheme, and let $\psi_t\in\Psi^{\mathrm{ctrl}}$
denote the control features available at time $t$, such as reference signals
or admissible bounds. The policy parameters ${w}$ are obtained by solving
{\begin{subequations}\label{eq:dpc_generic}
\begin{align}
\min_{{w}}\quad
& \frac{1}{m}\sum_{j=1}^{m}\Big(\sum_{t=1}^{N}
\ell_{\mathrm{DPC}}(x_t,u_{t-1},\psi_t)\notag\\
&\hspace{2.6cm}+\ell_{N}(x_N,\psi_N)\Big)
\label{eq:dpc_generic_a}\\
\text{s.t.}\quad
& x_{t+1}=F(x_t,u_t),\qquad u_t=\pi_{{w}}(x_t,\psi_t),
\label{eq:dpc_generic_b}\\
& x_t\in\mathcal{X},\quad u_t\in\mathcal{U},\quad \psi_t\in\Psi^{\mathrm{ctrl}},
\label{eq:dpc_generic_d}
\end{align}
\end{subequations}}
over $m$ sampled initial conditions and reference realizations{, with
$t=0,\ldots,N-1$ in \eqref{eq:dpc_generic_b}}. The stage loss
$\ell_{\mathrm{DPC}}$ collects the tracking objective, control regularization,
and differentiable penalties on constraint violation{; in practice, the
constraints \eqref{eq:dpc_generic_d} are enforced only through these
penalties}. {It may also penalize the increments $x_t-x_{t-1}$ and
$u_{t-1}-u_{t-2}$, with the convention $u_{-1}:=u_0$, and $\ell_N$ is an
optional terminal loss.} Since the rollout is differentiable in ${w}$, gradients
propagate through the entire horizon, so each action is optimized with its
downstream effect taken into account. Formulation \eqref{eq:dpc_generic} leaves
open how $F$ is obtained and how $\pi_{{w}}$ is parameterized;
Section~\ref{sec:sparsedpc} addresses both.

%%%%%%%%%%%%%%%%%%%%%%%%%%%%%%%%%%%%%%%%%%%%%%%%%%%%%%%%%%%%%%%%%%%%%%%%%%%%%%%%
% METHOD
%%%%%%%%%%%%%%%%%%%%%%%%%%%%%%%%%%%%%%%%%%%%%%%%%%%%%%%%%%%%%%%%%%%%%%%%%%%%%%%%

\section{Sparse Dictionary Differentiable Predictive Control}
\label{sec:sparsedpc}

SD-DPC proceeds in two stages, both instances of the thresholded
gradient-descent procedure of Algorithm~\ref{alg:TGD} applied to different
losses. A sparse dynamics model is first identified from multistep rollout
error, providing the differentiable prediction map; a sparse dictionary policy
is then trained by differentiating a finite-horizon closed-loop objective
through that map. Fig.~\ref{fig:sindy-dpc} summarizes the pipeline.

\subsection{Rollout-Based Sparse Identification}
\label{subsec:rollout_sindy}

We estimate $\Xi$ from rollout errors of the SINDy model rather than from
state-derivative data, so that $\dot{X}$ never enters the objective and the
identification does not inherit numerical differentiation error. Introducing
the one-step map induced by a fixed-step numerical integrator with step
$\Delta t$ and zero-order hold on the input,
\begin{equation}
x_{k+1}=\mathrm{ODESolve}\big({\hat f},\ x_k,\ u_k,\ \Delta t\big),
\label{eq:odesolve}
\end{equation}
{which we implement with the fourth-order Runge--Kutta scheme for the
continuous-time examples and use} consistently during identification and policy training. The data consist
of $m_{\mathrm{id}}$ trajectories
$\{(\hat{x}^j_0,\hat{u}^j_0),\ldots,(\hat{x}^j_{N_{\mathrm{id}}},\hat{u}^j_{N_{\mathrm{id}}})\}_{j=1}^{m_{\mathrm{id}}}$
collected under {persistently exciting} inputs, with $N_{\mathrm{id}}$ the
identification horizon. Applying the measured inputs to the model yields the
predicted states $\{x^j_k\}$, and the coefficients are obtained from
\begin{subequations}\label{eq:sindy-opt}
\begin{align}
\min_{\Xi}\quad
& \frac{1}{m_{\mathrm{id}}}\sum_{j=1}^{m_{\mathrm{id}}}\bigg(
    Q_1\big\|x_1^j-\hat{x}_1^j\big\|_2^2
    + Q_N\sum_{k=2}^{N_{\mathrm{id}}}\big\|x_k^j-\hat{x}_k^j\big\|_2^2\bigg)\notag\\
&\hspace{3.2cm}+\ \lambda\|\Xi\|_{1}
\label{eq:sindy-opta}\\
\text{s.t.}\quad
& x_{k+1}^j=\mathrm{ODESolve}\big({\hat f},\ x_k^j,\ \hat{u}_k^j,\ \Delta t\big),
\label{eq:sindy-optb}\\
& x_0^j=\hat{x}_0^j,
\label{eq:sindy-optc}
\end{align}
\end{subequations}
where $\lambda>0$ is the sparsity weight and $Q_1,Q_N\ge0$ weight the one-step
and multistep prediction errors.
With {$N_{\mathrm{id}}=1$ the multistep sum is empty, and} an explicit
Euler step {gives $x_1^j=\hat{x}_0^j+\Delta t\,\hat f(\hat{x}_0^j,\hat{u}_0^j)$,
which} is linear in $\Xi$, so \eqref{eq:sindy-opt} reduces to the $\ell_1$-regularized SINDy regression with forward-difference derivatives~\cite{Brunton2016a}; the
multistep terms extend it by penalizing the error accumulated along the rollout.

\subsection{Differentiable Sparse Dictionary Policy Parameterization}
\label{subsec:sparse_policy}

The feedback policy uses the same dictionary structure. With
$\mathcal{X}\subset\mathbb{R}^{n_x}$ and $\mathcal{U}\subset\mathbb{R}^{n_u}$
the admissible state and input sets, the policy input is $z_t=[x_t\ \psi_t]$
and the policy library is
\begin{equation}
\Theta_{\pi}(z)=\big[1,\ z^\top,\ \phi_{\mathrm{pol}}(z)^\top,\ \phi_{\mathrm{tr}}(z)^\top\big]\in\mathbb{R}^{1\times p_{\pi}},
\label{eq:thetaC}
\end{equation}
where $\phi_{\mathrm{pol}}$ collects quadratic and cross terms and
$\phi_{\mathrm{tr}}$ trigonometric terms, with maximal polynomial degree and
trigonometric frequency treated as hyperparameters. The sparse dictionary
policy is
\begin{equation}
u_t=\pi_{\Xi_{\pi}}(z_t)=\Pi_{\mathcal{U}}\big(\Theta_{\pi}(z_t)\,\Xi_{\pi}\big),
\qquad \Xi_{\pi}\in\mathbb{R}^{p_{\pi}\times n_u},
\label{eq:sparse_policy}
\end{equation}
with $\Pi_{\mathcal{U}}$ the componentwise projection enforcing
$\underline{u}\le u_t\le\overline{u}$. Column $i$ of $\Xi_\pi$ feeds input
channel $i$ alone, so sparsity in $\Xi_\pi$ determines directly which feedback
terms {are retained} in each channel, and the policy is evaluated without any
online optimization.
The exact derivative of the piecewise-affine projection $\Pi_{\mathcal U}$
annihilates the sensitivity of every saturated channel, so a differentiation
surrogate is required. With the raw channel output
$v_i(z):=[\Theta_\pi(z)\Xi_\pi]_i$, a band half-width $b_i\ge0$, a leak
$\epsilon\in[0,1]$, and the band
$\mathcal V_i:=[\underline u_i-b_i,\ \overline u_i+b_i]$, the forward pass
evaluates \eqref{eq:sparse_policy} exactly, so every loss and reported
trajectory is that of the true saturated closed loop, while the backward pass
uses the surrogate derivative
\begin{equation}
{\Gamma^{\epsilon}}:=\operatorname{diag}({\gamma^{\epsilon}_{1}},\dots,{\gamma^{\epsilon}_{n_u}}),
\qquad
{\gamma^{\epsilon}_{i}}:=
\begin{cases}
1, & v_i\in\mathcal V_i,\\[1pt]
\epsilon, & v_i\notin\mathcal V_i ,
\end{cases}
\label{eq:leaky_lambda}
\end{equation}
which recovers the {a.e.\ derivative of $\Pi_{\mathcal U}$} at $(\epsilon,b)=(0,0)$ and the plain
straight-through estimator~\cite{Bengio2013} at $(\epsilon,b)=(1,0)$. The
intermediate regime avoids two failure modes. With $\epsilon=0$, a coefficient
feeding a saturated channel receives no gradient {from the loss and only
shrinks}, so thresholding acts on stale rather than fitted evidence; with
$\epsilon=1$, a deeply saturated channel exerts influence through a
sensitivity the plant does not possess. The leak attenuates channels far
outside $\mathcal U$ without silencing them, while a small band $b$ keeps the
gradient from switching abruptly as a channel oscillates about its bound.

\subsection{SD-DPC Problem Formulation and Training Algorithm}

Adding a sparsity
penalty and using the sparse dictionary policy in \eqref{eq:dpc_generic} gives
{\begin{subequations}\label{eq:dpc}
\begin{align}
\min_{\Xi_{\pi}}\quad
& \frac{1}{m}\sum_{j=1}^{m}\Big(\sum_{t=1}^{N}
\ell_{\mathrm{DPC}}(x_t,u_{t-1},\psi_t)\notag\\
&\hspace{1.2cm}+\ell_{N}(x_N,\psi_N)\Big)
+\lambda_{\pi}\|\Xi_{\pi}\|_{1}
\label{eq:dpc_a}\\
\text{s.t.}\quad
& x_{t+1}=F_{\Xi}(x_t,u_t),\qquad u_t=\pi_{\Xi_{\pi}}(z_t),
\label{eq:dpc_b}\\
& x_0\sim\mathcal{P}_{x_0},\quad \psi_t\in\Psi^{\mathrm{ctrl}},
\label{eq:dpc_d}
\end{align}
\end{subequations}}
where $F_{\Xi}$ is the prediction map \eqref{eq:odesolve} of the identified
SINDy model. The input constraint $u_t\in\mathcal{U}$ holds by construction
through $\Pi_{\mathcal U}$, whereas the state constraint $x_t\in\mathcal{X}$
enters $\ell_{\mathrm{DPC}}$ through penalty terms, specified for each example
in Section~\ref{sec:experiments}.
Both problems have the composite form of a loss $\ell$, differentiable almost
everywhere, plus an entrywise $\ell_1$ penalty. Their coefficients are updated
by a first-order method, using a subgradient for the $\ell_1$ term, and are
periodically pruned, as in sequentially thresholded gradient training of SINDy
models~\cite{Champion2019}{; the two stages differ only in the loss and in
the matrix being sparsified (Algorithm~\ref{alg:TGD})}.
The entrywise penalty $\|\Xi\|_1=\sum_{i,j}|\Xi_{ij}|$ shrinks uninformative
coefficients toward zero without setting them exactly to zero, so a separate
thresholding step is required. Every ${n_p}$ epochs the retained
coefficient set is updated entrywise,
\begin{equation}
{\mathcal{S}}=\{\,(i,j)\ :\ |\Xi_{ij}|\ge\tau\,\},
\label{eq:retained_set}
\end{equation}
so each output channel $j$ retains its own subset of dictionary terms.
Entries outside ${\mathcal{S}}$ are set to zero and excluded from
subsequent updates, a library column is dropped from the compiled evaluation
once its term is {removed from} every channel, and the optimizer is
reinitialized on the reduced parameter set. The procedure stops once
${\mathcal{S}}$ has been stable over $M$ consecutive prunes or the epoch
budget is exhausted, after which the coefficients {in $\mathcal{S}$} are
refitted without the $\ell_1$ penalty to remove its shrinkage bias. In our implementation, the update in
line~{3} is carried out with AdamW \cite{loshchilov2017decoupled}; the
properties below concern the {pruning mechanism} and do not depend on this
choice.

\begin{algorithm}[!t]
\caption{Sequentially Thresholded Gradient Descent with Debiasing}
\label{alg:TGD}
\small
\setlength{\abovedisplayskip}{2pt}
\setlength{\belowdisplayskip}{2pt}
\setlength{\jot}{2pt}
\begin{algorithmic}[1]
\Require $\Theta$: library of $p$ terms;\ \ $\ell$: loss (a.e.\ differentiable)
\Statex \hspace{\algorithmicindent}$\Xi\in\mathbb{R}^{p\times n}$: coefficients,
$\Xi_{ij}$ = term $i$, channel $j$
\Statex \hspace{\algorithmicindent}$\tau$: threshold;\ ${n_p}$: prune interval;\
$n_e$: epochs;\ $M$: patience;\ $K$: debias epochs
\Statex \textbf{Note:} applied to \eqref{eq:sindy-opt} for $\Xi$
($n=n_x$), and to \eqref{eq:dpc} for $\Xi_\pi$ ($n=n_u${, $(\lambda,\tau)\to(\lambda_\pi,\tau_\pi)$}) with
$\partial\Pi_{\mathcal U}={\Gamma^{\epsilon}}$ of \eqref{eq:leaky_lambda}.
\State ${\mathcal{S}_{\mathrm{prev}}} \gets$ all entries of $\Xi$;\quad $s \gets 0$
\For{$e=1,\dots,n_e$}
\State Update $\Xi_{{\mathcal{S}_{\mathrm{prev}}}}$ by minimizing $\ell+\lambda\|\Xi\|_1$
\If{$e \bmod {n_p} = 0$}
\State ${\mathcal{S}} \gets \{(i,j):|\Xi_{ij}|\ge\tau\}$
\Comment{per-channel}
\If{${\mathcal{S}\neq\mathcal{S}_{\mathrm{prev}}}$}
\State $\Xi_{ij}\gets 0$ for all $(i,j)\notin{\mathcal{S}}$
\State Drop from $\Theta$ the terms {removed from} every channel
\State Reinitialize optimizer on the reduced set
\State ${\mathcal{S}_{\mathrm{prev}}\gets\mathcal{S}}$;\quad $s\gets 0$
\Else
\State $s\gets s+1$;\quad \textbf{if} $s\ge M$ \textbf{then break}
\EndIf
\EndIf
\EndFor
\State Save the biased sparse coefficients $\Xi^{\mathrm{bias}}\gets\Xi$
\State Warm-start $\Xi_{{\mathcal{S}}}$ from $\Xi^{\mathrm{bias}}$ and reinitialize the optimizer
\For{$e=1,\dots,K$}
\State Update only $\Xi_{{\mathcal{S}}}$ by minimizing $\ell$ with $\lambda=0$
\Comment{fixed {retained set}}
\EndFor
\State \Return debiased sparse coefficient matrix $\Xi$
\end{algorithmic}
\end{algorithm}

\vspace{-0.1in}
\begin{proposition}[Finite support stabilization]
\label{prop:support}
Let $\tau>0$. In Algorithm~\ref{alg:TGD}, a coefficient that has been pruned is
never retained again. Consequently, the retained set changes at most $pn$
times, and the thresholding loop terminates after at most $n_pM(pn+1)$ epochs,
for any loss and any epoch budget $n_e$.
\end{proposition}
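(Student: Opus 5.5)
The argument rests on a monotonicity property of the retained set $\mathcal S_{\mathrm{prev}}$ in Algorithm~\ref{alg:TGD}, followed by a counting argument.

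\emph{Monotonicity.} Line~3 updates only the entries in $\mathcal S_{\mathrm{prev}}$. Any entry $(i,j)\notin\mathcal S_{\mathrm{prev}}$ was set to zero in line~7 when it left the set, and it is never touched again. Its value therefore stays exactly $0<\tau$ at every later prune. Line~5 then excludes it, which gives $\mathcal S\subseteq\mathcal S_{\mathrm{prev}}$ at every prune epoch. I will prove this by induction over prune epochs. The invariant is that every entry outside $\mathcal S_{\mathrm{prev}}$ is zero. It holds initially because $\mathcal S_{\mathrm{prev}}$ is the full set. It is preserved by line~3, which modifies only retained entries, and by line~7, which zeros exactly the newly removed ones. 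Dropping library columns in line~8 and reinitializing the optimizer in line~9 act only on removed entries or optimizer state, so they do not affect the invariant. The only point that needs care is that the first-order update cannot move frozen entries, for example through AdamW weight decay or momentum. This is guaranteed because they are excluded from the parameter set being optimized.

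\emph{Counting changes.} Whenever line~6 detects $\mathcal S\neq\mathcal S_{\mathrm{prev}}$, monotonicity gives $\mathcal S\subsetneq\mathcal S_{\mathrm{prev}}$. Hence $|\mathcal S_{\mathrm{prev}}|$ strictly decreases. It starts at $pn$ and is bounded below by $0$, so at most $pn$ changes can occur.

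\emph{Epoch bound.} The prune checks partition the run into $n_p$-epoch blocks. Between two consecutive changes, and after the last one, the counter $s$ increments on each unchanged check and resets only on a change. Once $M$ unchanged checks accumulate without a reset, the loop breaks. So each of the at most $pn+1$ maximal change-free stretches contains at most $M$ unchanged checks before termination. Each change itself occupies one check, so there are at most $pn$ change checks in total.

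Summing gives at most $M(pn+1)+pn$ checks. This exceeds the stated $M(pn+1)$, so I must recount. The intended convention is presumably that the check at which a change happens counts toward the following stretch. Under that convention, each of the $pn+1$ stretches spans at most $M$ checks, i.e.\ $n_pM$ epochs. This yields the bound $n_pM(pn+1)$ when $M\ge1$.

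I expect this last accounting step to be the main obstacle. It is an off-by-one issue, and I would settle it by carefully indexing which prune epochs are attributed to which stretch. If the stated bound does not hold under the literal algorithm, I would weaken it to $n_p\bigl(M(pn+1)+pn\bigr)$. The bound holds for any $n_e$ because termination can only happen earlier if the epoch budget runs out. It holds for any loss because the argument never uses the values of $\ell$.
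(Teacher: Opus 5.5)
Your monotonicity argument and the bound of at most $pn$ changes are correct and match the paper's proof. Your invariant-based induction (every entry outside $\mathcal S_{\mathrm{prev}}$ is zero and frozen, including under AdamW's weight decay) is a more careful version of the paper's one-line argument. The gap is in the epoch bound. You leave it unresolved and propose weakening it to $n_p\bigl(M(pn+1)+pn\bigr)$, yet the stated bound holds for the algorithm exactly as written.

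Your first count is loose because it lets every change-free stretch contain up to $M$ unchanged prunes. Only the final stretch can. The counter $s$ is reset to $0$ at each change and incremented at each unchanged prune, and the loop breaks as soon as $s\ge M$. So a stretch that ends in a change contains at most $M-1$ unchanged prunes, since an $M$-th one would already have terminated the loop. Concretely, let the changes occur at prunes $k_1<\dots<k_c$ with $c\le pn$. Then $k_1\le M$ and $k_{r+1}-k_r\le M$, so $k_c\le cM$. The loop therefore breaks at prune $k_c+M\le M(pn+1)$, or earlier if the budget $n_e$ runs out. Since the $k$-th prune occurs at epoch $kn_p$, this gives $n_pM(pn+1)$. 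This is the paper's argument: each change, together with the at most $M-1$ unchanged prunes preceding it, forms a block of at most $M$ prunes, and the last change is followed by exactly $M$.

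Appealing to a bookkeeping ``convention'' cannot replace this observation, because attributing prunes to stretches does not change how many prunes the algorithm performs. The convention you chose also fails on its own terms. Attributing each change to the following stretch makes the final stretch consist of the last change plus $M$ unchanged prunes, i.e.\ $M+1$ checks. So your claim that every stretch spans at most $M$ checks is false under it. The total is rescued only because the first stretch then has at most $M-1$ checks, which again rests on the $M-1$ observation you had not made. Your caveat $M\ge1$ is legitimate and implicit in the paper: for $M=0$ the stated bound would be $0$ epochs, which no run attains.
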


\begin{proof}
A pruned coefficient is set to zero and excluded from all subsequent updates,
so its value remains zero. Since $\tau>0$, it fails the test
$|\Xi_{ij}|\ge\tau$ at every later prune and is never retained again. Hence,
if $\mathcal{S}_r$ denotes the retained set after the $r$-th change, each
change removes at least one entry, and
$\mathcal{S}_{r+1}\subsetneq\mathcal{S}_r$. As $\Xi$ has $pn$ entries, at most
$pn$ such changes can occur. The loop stops after $M$ consecutive prunes
without a change, so each change is preceded by at most $M-1$ unchanged
prunes, and the last change is followed by exactly $M$. The total number of
prunes is therefore at most $M(pn+1)$, and since prunes are $n_p$ epochs
apart, the loop terminates within $n_pM(pn+1)$ epochs. {\unskip\nobreak\hfill$\square$}
\end{proof}
\vspace{0.08in}
Proposition~\ref{prop:support} splits Algorithm~\ref{alg:TGD} into a finite
phase that selects which terms are retained and a refit that only adjusts
their values. {Since pruning occurs before the optimization has
converged, its cost is bounded through a stationarity residual.}
\vspace{-0.25in}
{\begin{proposition}[Pruning cost]\label{prop:prune}
Let $J=\ell+\lambda\|\cdot\|_1$, let $\Xi'$ be obtained from $\Xi$ by
zeroing a set $\mathcal D$ of $q$ retained entries with $0<|\Xi_{ij}|<\tau$,
let $\nabla\ell$ be $\beta$-Lipschitz on the segment between them, and let
the residuals
$r_{ij}=\partial\ell(\Xi)/\partial\Xi_{ij}+\lambda\,\mathrm{sign}(\Xi_{ij})$
on $\mathcal D$ satisfy $\|r_{\mathcal D}\|_F\le\eta$. Then
$J(\Xi')\le J(\Xi)+\eta\tau\sqrt q+\tfrac{\beta}{2}q\tau^2$ and
$\ell(\Xi')\le\ell(\Xi)+\lambda q\tau+\eta\tau\sqrt q+\tfrac{\beta}{2}q\tau^2$.
\end{proposition}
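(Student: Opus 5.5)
The plan is to compare $\Xi$ and $\Xi'$ along the straight segment joining them, using the standard descent lemma for the smooth part $\ell$ and an exact calculation for the $\ell_1$ part. Set $\Delta:=\Xi'-\Xi$. By construction $\Delta$ is supported on $\mathcal D$, with $\Delta_{ij}=-\Xi_{ij}$ there, so $|\Delta_{ij}|<\tau$ and
\[
\|\Delta\|_F=\|\Delta_{\mathcal D}\|_F\le\tau\sqrt q .
\]

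\textbf{Step 1 (smooth part).} Since $\nabla\ell$ is $\beta$-Lipschitz on the segment $\{\Xi+s\Delta:\ s\in[0,1]\}$, I will write
\[
\ell(\Xi')-\ell(\Xi)=\int_0^1\langle\nabla\ell(\Xi+s\Delta),\Delta\rangle\,ds .
\]
Adding and subtracting $\langle\nabla\ell(\Xi),\Delta\rangle$ and applying Cauchy--Schwarz with the Lipschitz bound gives
\[
\ell(\Xi')\le\ell(\Xi)+\langle\nabla\ell(\Xi),\Delta\rangle+\tfrac{\beta}{2}\|\Delta\|_F^2 .
\]
Because $\Delta$ vanishes off $\mathcal D$, the inner product involves only the partial derivatives $\partial\ell(\Xi)/\partial\Xi_{ij}$ with $(i,j)\in\mathcal D$.

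\textbf{Step 2 (penalty part).} The $\ell_1$ change is exact:
\[
\lambda\big(\|\Xi'\|_1-\|\Xi\|_1\big)=-\lambda\sum_{\mathcal D}|\Xi_{ij}|=\lambda\sum_{\mathcal D}\mathrm{sign}(\Xi_{ij})\,\Delta_{ij}.
\]
The last equality uses $\Xi_{ij}\neq0$ on $\mathcal D$ and $\Delta_{ij}=-\Xi_{ij}$. Adding this to Step 1 combines the gradient and sign terms into the residual:
\[
J(\Xi')-J(\Xi)\le\langle r_{\mathcal D},\Delta_{\mathcal D}\rangle+\tfrac{\beta}{2}\|\Delta\|_F^2 .
\]
Cauchy--Schwarz with $\|r_{\mathcal D}\|_F\le\eta$ and $\|\Delta\|_F\le\tau\sqrt q$ then yields the first claim, $J(\Xi')\le J(\Xi)+\eta\tau\sqrt q+\tfrac{\beta}{2}q\tau^2$.

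\textbf{Step 3 (bound on $\ell$).} I will use $\ell(\Xi')-\ell(\Xi)=J(\Xi')-J(\Xi)+\lambda\sum_{\mathcal D}|\Xi_{ij}|$ together with $\sum_{\mathcal D}|\Xi_{ij}|<q\tau$. This adds the extra $\lambda q\tau$ term to the bound from Step 2.

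\textbf{Main obstacle.} The only delicate point is regularity, since $\ell$ is only a.e.\ differentiable; for the policy stage it is built from the surrogate $\Gamma^\epsilon$. I will rely strictly on the hypothesis that $\nabla\ell$ exists and is $\beta$-Lipschitz on the segment, which makes $s\mapsto\ell(\Xi+s\Delta)$ continuously differentiable there. The fundamental-theorem-of-calculus form in Step 1 is then legitimate, and no global smoothness is needed. Nonsmoothness of $\|\cdot\|_1$ causes no difficulty, because Step 2 is an identity rather than a linearization: every entry of $\mathcal D$ keeps its sign, or becomes zero, along the segment.
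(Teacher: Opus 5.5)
Your proof is correct and follows the paper's argument essentially line for line. Both use the descent lemma for $\ell$ along the segment (which you derive explicitly from the integral form), the exact identity $\lambda\|\Xi'\|_1-\lambda\|\Xi\|_1=\langle\lambda\,\mathrm{sign}(\Xi),\Delta\rangle$ to merge the gradient and sign terms into $r_{\mathcal D}$, Cauchy--Schwarz with $\|\Delta\|_F\le\tau\sqrt q$, and finally the addition of $\lambda\sum_{\mathcal D}|\Xi_{ij}|\le\lambda q\tau$ to obtain the bound on $\ell$.
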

\begin{proof}
Let $\Delta=\Xi'-\Xi$, so that $\|\Delta\|_F\le\tau\sqrt q$. The descent lemma
and $\lambda\|\Xi'\|_1-\lambda\|\Xi\|_1=\langle\lambda\,\mathrm{sign}(\Xi),\Delta\rangle$
give $J(\Xi')-J(\Xi)\le\langle r_{\mathcal D},\Delta_{\mathcal D}\rangle
+\tfrac{\beta}{2}\|\Delta\|_F^2$, and the Cauchy--Schwarz inequality yields
the first bound; the second adds
$\lambda\sum_{\mathcal D}|\Xi_{ij}|\le\lambda q\tau $. {\unskip\nobreak\hfill$\square$}
\end{proof}}

\vspace{0.1in}
{The residual vanishes at a stationary point and otherwise absorbs the
optimizer's inexactness and the surrogate bias \eqref{eq:leaky_lambda}; the
bound on $\ell$ shows why $\lambda$ and $\tau$ are tuned jointly
(Fig.~\ref{fig:lambda-sparsity}). In the policy stage, smoothness holds along
segments on which no channel crosses a saturation or constraint boundary.}

\subsection{\texorpdfstring{{Structural Properties of SD-DPC Closed Loops}}{Structural Properties of SD-DPC Closed Loops}}
\label{subsec:properties}
{Since the learned policy is a finite sum of known functions, its
sensitivity can be bounded explicitly from the retained terms. We write $\pi=\Pi_{\mathcal U}(v)$ with
the raw output $v(x,\psi)=\Theta_\pi(x,\psi)\Xi_\pi$, let $\mathcal S_j$
collect the terms retained in channel $j$, and let $\|\cdot\|$ denote the
Euclidean norm.}
{\begin{assumption}\label{ass:lip}
States and features lie in compact sets $\tilde{\mathcal X}$ and
$\tilde\Psi$, on which (a)~each library function $\theta_{\pi,i}(\cdot,\psi)$
is $L_i$-Lipschitz for every $\psi\in\tilde\Psi$; (b)~$\|\Theta_\pi(x,\psi)\|\le
B_\Theta$; and (c)~$\|F(x,u)-F(x',u')\|\le L_x\|x-x'\|+L_u\|u-u'\|$ for all
$x,x'\in\tilde{\mathcal X}$ and $u,u'\in\mathcal U$.
\end{assumption}}
\vspace{-0.2in}
{\begin{proposition}[Lipschitz bound and perturbation of sparse policies]\label{prop:lipschitz}
Let Assumption~\ref{ass:lip} hold. Then, for all $x,x'\in\tilde{\mathcal X}$
and $\psi\in\tilde\Psi$:
\begin{enumerate}
\item[(i)] \emph{Lipschitz bound:}
$\|\pi(x,\psi)-\pi(x',\psi)\|\le L_\pi\|x-x'\|$, with
\begin{equation}
L_\pi:=\Big(\textstyle\sum_{j=1}^{n_u}\big(\sum_{i\in\mathcal S_j}|\Xi_{\pi,ij}|\,L_i\big)^2\Big)^{1/2}.
\label{eq:Lpi}
\end{equation}
\item[(ii)] \emph{Policy perturbation:} the policy
$\tilde\pi=\Pi_{\mathcal U}(\Theta_\pi\tilde\Xi_\pi)$ with perturbed
coefficients satisfies
$\|\pi(x,\psi)-\tilde\pi(x,\psi)\|\le B_\Theta\|\Xi_\pi-\tilde\Xi_\pi\|_F$;
in particular, pruning $q$ coefficients below $\tau_\pi$ changes the action
by at most $B_\Theta\tau_\pi\sqrt q$.
\item[(iii)] \emph{Trajectory perturbation:} let $\tilde\pi$ be any policy
with values in $\mathcal U$ and $\|\pi-\tilde\pi\|\le\varepsilon_\pi$ on
$\tilde{\mathcal X}\times\tilde\Psi$, let $\|F-\tilde F\|\le\delta$ on
$\tilde{\mathcal X}\times\mathcal U$, and let
$x_{t+1}=F(x_t,\pi(x_t,\psi_t))$ and
$\tilde x_{t+1}=\tilde F(\tilde x_t,\tilde\pi(\tilde x_t,\psi_t))$ start from
the same $x_0$ under the same $\psi_t\in\tilde\Psi$ and remain in
$\tilde{\mathcal X}$ for $t\le N$. Then
\begin{equation}
\|x_t-\tilde x_t\|\le e_t:=(\delta+L_u\varepsilon_\pi)\textstyle\sum_{s=0}^{t-1}a^s
\label{eq:deviation}
\end{equation}
for $t\le N$, where $a:=L_x+L_uL_\pi$. Moreover, if $h$ is $L_h$-Lipschitz on
$\tilde{\mathcal X}$ and one of the two trajectories satisfies $h\ge L_he_t$
for $t\le N$, the other satisfies $h\ge0$.
\end{enumerate}
\end{proposition}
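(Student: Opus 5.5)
The plan is to prove the three parts in order. Each rests on two facts. First, the componentwise projection $\Pi_{\mathcal U}$ onto the box $[\underline u,\overline u]$ is $1$-Lipschitz in every coordinate, and hence nonexpansive in the Euclidean norm. Second, the raw output $v=\Theta_\pi\Xi_\pi$ is linear in $\Xi_\pi$ and is a finite sum of library functions in $(x,\psi)$.

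For (i), we fix $\psi$ and a channel $j$. Nonexpansiveness of the scalar clip $[\underline u_j,\overline u_j]$ gives $|\pi_j(x,\psi)-\pi_j(x',\psi)|\le|v_j(x,\psi)-v_j(x',\psi)|$. Since only the terms in $\mathcal S_j$ have nonzero coefficients in column $j$, we have $v_j=\sum_{i\in\mathcal S_j}\Xi_{\pi,ij}\theta_{\pi,i}$. The triangle inequality and Assumption~\ref{ass:lip}(a) then bound the channel difference by $\big(\sum_{i\in\mathcal S_j}|\Xi_{\pi,ij}|L_i\big)\|x-x'\|$. Squaring, summing over $j$ and taking the square root gives \eqref{eq:Lpi}.

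For (ii), nonexpansiveness again gives $\|\pi-\tilde\pi\|\le\|\Theta_\pi(\Xi_\pi-\tilde\Xi_\pi)\|$. For a row vector $\Theta_\pi$ and a matrix $D$, Cauchy--Schwarz applied column by column gives $\|\Theta_\pi D\|\le\|\Theta_\pi\|\,\|D\|_F$, and Assumption~\ref{ass:lip}(b) bounds $\|\Theta_\pi\|$ by $B_\Theta$. For the pruning statement, the perturbation has exactly $q$ nonzero entries, each of magnitude below $\tau_\pi$, so its Frobenius norm is at most $\tau_\pi\sqrt q$.

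For (iii), we set $d_t=\|x_t-\tilde x_t\|$, so $d_0=0$, and split the one-step difference as
$x_{t+1}-\tilde x_{t+1}=[F(x_t,\pi(x_t,\psi_t))-F(\tilde x_t,\tilde\pi(\tilde x_t,\psi_t))]+[F-\tilde F](\tilde x_t,\tilde\pi(\tilde x_t,\psi_t))$.
The second bracket is at most $\delta$, because $\tilde x_t\in\tilde{\mathcal X}$ and $\tilde\pi$ takes values in $\mathcal U$. For the first bracket, Assumption~\ref{ass:lip}(c) applies because both states lie in $\tilde{\mathcal X}$ and both inputs lie in $\mathcal U$, which gives the bound $L_xd_t+L_u\|\pi(x_t,\psi_t)-\tilde\pi(\tilde x_t,\psi_t)\|$. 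We then insert $\pi(\tilde x_t,\psi_t)$ and use (i) and the uniform bound $\varepsilon_\pi$, so the policy term is at most $L_\pi d_t+\varepsilon_\pi$. This yields the recursion $d_{t+1}\le a\,d_t+\delta+L_u\varepsilon_\pi$. Unrolling it from $d_0=0$ by induction gives $d_t\le e_t$. For the constraint claim, suppose $h(y_t)\ge L_he_t$ for one trajectory $y$ and let $y'$ be the other. Then $h(y'_t)\ge h(y_t)-L_h\|y_t-y'_t\|\ge L_he_t-L_he_t=0$, using the bound just proved.

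The main obstacle is the bookkeeping in (iii): we must check at each step that the hypotheses of Assumption~\ref{ass:lip}(c) and the mismatch bound are evaluated on their domains. This is why the statement requires both trajectories to stay in $\tilde{\mathcal X}$ and both policies to take values in $\mathcal U$.
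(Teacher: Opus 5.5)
Your proof is correct and follows the paper's argument essentially step for step. It uses nonexpansiveness of the box projection plus the channelwise triangle inequality for (i), the row-vector Cauchy--Schwarz bound with $B_\Theta$ for (ii), and the recursion $d_{t+1}\le a\,d_t+\delta+L_u\varepsilon_\pi$ with $d_0=0$ followed by the Lipschitz argument on $h$ for (iii). The only difference is cosmetic: you apply Assumption~\ref{ass:lip}(c) to the two state--input pairs at once and then insert $\pi(\tilde x_t,\psi_t)$ in the input, whereas the paper inserts the intermediate point $F(\tilde x_t,\pi(\tilde x_t))$ to get a three-term split with identical bounds.
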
}
{\begin{proof}
(i) The componentwise projection $\Pi_{\mathcal U}$ is $1$-Lipschitz, so
$\|\pi(x,\psi)-\pi(x',\psi)\|\le\|v(x,\psi)-v(x',\psi)\|$. By
Assumption~\ref{ass:lip}(a),
$|v_j(x,\psi)-v_j(x',\psi)|\le\sum_{i\in\mathcal S_j}|\Xi_{\pi,ij}|\,
|\theta_{\pi,i}(x,\psi)-\theta_{\pi,i}(x',\psi)|\le\big(\sum_{i\in\mathcal S_j}
|\Xi_{\pi,ij}|L_i\big)\|x-x'\|$, and summing the squares over $j$ gives (i).
(ii) Likewise,
$\|\pi-\tilde\pi\|\le\|\Theta_\pi(\Xi_\pi-\tilde\Xi_\pi)\|\le\|\Theta_\pi\|\,
\|\Xi_\pi-\tilde\Xi_\pi\|_F$, and $q$ removed entries below $\tau_\pi$ have
Frobenius norm at most $\tau_\pi\sqrt q$. (iii) With $d_t:=\|x_t-\tilde x_t\|$
and $\psi_t$ omitted, split $x_{t+1}-\tilde x_{t+1}$ into
$F(x_t,\pi(x_t))-F(\tilde x_t,\pi(\tilde x_t))$,
$F(\tilde x_t,\pi(\tilde x_t))-F(\tilde x_t,\tilde\pi(\tilde x_t))$, and
$F(\tilde x_t,\tilde\pi(\tilde x_t))-\tilde F(\tilde x_t,\tilde\pi(\tilde x_t))$.
By Assumption~\ref{ass:lip}(c) and (i), these are bounded by $a\,d_t$,
$L_u\varepsilon_\pi$, and $\delta$, so $d_{t+1}\le a\,d_t+L_u\varepsilon_\pi+\delta$
with $d_0=0$, and induction gives \eqref{eq:deviation}. Finally,
$|h(x_t)-h(\tilde x_t)|\le L_hd_t\le L_he_t$. {\unskip\nobreak\hfill$\square$}
\end{proof}}
\vspace{0.1in}
{All constants are explicit: $L_i=\omega$ for $\sin\omega x_k$ and
$\cos\omega x_k$, $L_i=0$ for terms in $\psi$ alone, and, on a convex
$\tilde{\mathcal X}$, $L_i$ is the largest gradient norm of $\theta_{\pi,i}$.
Since $L_\pi\le(\max_iL_i)\|\Xi_\pi\|_1$, the $\ell_1$ penalty also limits the
sensitivity. The model constants $L_x$ and $L_u$ follow from $\Xi$ and the
integrator in the same way, so Proposition~\ref{prop:lipschitz} is evaluated
from the two sparse coefficient matrices and the library, whereas for deep neural policies such
bounds require products of layer norms or semidefinite relaxations
\cite{fazlyab2019efficient}. By (ii) and (iii), a pruning step ($\tilde F=F$) or a
model error with $\tilde\pi=\pi$, such as the transfer from $F_\Xi$ to the
plant, moves the trajectory in proportion to its size, with growth at most
geometric at rate $a$; since typically $a>1$, the bound is conservative over
long horizons, and $\tilde{\mathcal X}$ is chosen to contain both
trajectories (Section~\ref{sec:results}).}

\section{Numerical Examples}
\label{sec:experiments}

Three numerical case studies demonstrate the
flexibility of SD-DPC: constrained reference tracking of a nonlinear system, {stabilization of a nonlinear oscillator at its unstable origin}, and an obstacle-avoidance reaching task. All examples are implemented in NeuroMANCER
v1.5.3~\cite{Neuromancer2023}, and all computations are performed on a MacBook
with an Apple M1 processor and 8\,GB of RAM. The code is available at
\url{https://github.com/SOLARIS-JHU/SD-DPC}.

In each case study, SD-DPC is compared against three baselines. The first
is an optimization-based benchmark solved with IPOPT
\cite{wachter2006implementation} through CasADi \cite{Andersson2019},
{which minimizes the same stage and terminal costs as the policy
objective{, unless stated otherwise in the case study, and uses the true
dynamics, and therefore has exact model knowledge}. For the two-tank system, it is a
receding-horizon MPC whose horizon starts at 50 steps, shrinks toward one
step as the end of each reference segment approaches, and is reset when the
reference changes. For the Van der Pol oscillator and the 2-D integrator,
the full $N$-step problem is solved once per scenario and the optimal input
sequence is applied in open loop. Because this benchmark is computed for
each scenario separately, whereas a learned policy is a single feedback law
shared by all scenarios, it serves as a performance reference rather than
as a competitor.}
NN-DPC is a deep neural policy trained by
\eqref{eq:dpc_generic} on the same identified model and represents the
standard DPC parameterization; in all cases, it is a multilayer perceptron
with two hidden layers of $32$ units and {GELU} activations. Finally,
NN-Distilled follows the distillation approach of SINDy-RL
\cite{Zolman2025}: a sparse dictionary policy is regressed onto the
state--action pairs generated by the trained NN-DPC policy. Its terms are
fixed to those retained by SD-DPC, so the two differ only in how the
coefficients on this common support are obtained, either by regression onto
the neural policy's actions or under the closed-loop objective
\eqref{eq:dpc}{, which isolates the effect of the training objective}.

{All policies are trained on
rollouts of the identified model $F_\Xi$, whereas all controllers are
evaluated on the true plants \eqref{eq:twotank}, \eqref{eq:vdp}, and
\eqref{eq:dint}, so that the reported results include the effect of model
mismatch.}
{Since state constraints, including the obstacle, are only penalized,
their satisfaction is assessed on the evaluation scenarios of
Section~\ref{sec:results}.}

%--------------------------------------------------------------------------\

\subsection{Two-Tank System}
\label{subsec:twotank}

We consider a cascaded two-tank system in which a pump feeds an upper tank that
drains by gravity into a lower one. The state $x=[x_1,x_2]^\top$ collects the
two liquid levels and the inputs $u=[u_1,u_2]^\top$ are the
pump command and the valve split,
\begin{subequations}\label{eq:twotank}
\begin{align}
\dot{x}_{1} &= c_1(1 - u_{2})u_{1} - c_2\sqrt{x_{1}}, \label{eq:twotank_a}\\
\dot{x}_{2} &= c_1 u_{2}u_{1} + c_2\sqrt{x_{1}} - c_2\sqrt{x_{2}},
\label{eq:twotank_b}
\end{align}
\end{subequations}
with inlet and outlet valve coefficients $(c_1,c_2)=(0.08,0.04)$.
{The policy drives the levels to time-varying references $r_t$ within the
level bounds $\underline{x}=0$ and $\overline{x}=1$, with inputs in
$\mathcal U=[0,1]^2$ and features $\psi_t=(r_t,\underline x_t,\overline x_t)$.
The stage loss combines reference tracking with penalties on the level
bounds,}
{\begin{equation}\label{eq:dpc_loss}
\begin{aligned}
\ell_{\mathrm{DPC}}&=Q_r\|r_t-x_t\|_2^2+\ell_{\underline{x},t}+\ell_{\overline{x},t},\\
\ell_{\underline{x},t}&=Q_x\big\|[\underline{x}-x_t]_+\big\|_2^2,\quad
\ell_{\overline{x},t}=Q_x\big\|[x_t-\overline{x}]_+\big\|_2^2,
\end{aligned}
\end{equation}}
{where $[z]_+=\max(0,z)$, and there is no terminal loss, $\ell_N\equiv0$.}

%--------------------------------------------------------------------------
\subsection{Forced Van der Pol Oscillator}
\label{subsec:vdp}

We next consider the forced Van der Pol oscillator, whose uncontrolled
trajectories converge to an attracting limit cycle surrounding an unstable
origin. With $x=[x_1,x_2]^\top$ the position and velocity and $u$ the forcing,
\begin{equation}\label{eq:vdp}
\dot{x}_{1} = x_{2},\qquad
\dot{x}_{2} = \mu (1 - x_{1}^2) x_{2} - x_{1} + u,\qquad \mu = 1.
\end{equation}
{The policy regulates the state to the origin, thereby suppressing the
limit cycle, with input bounds $\overline{u}=-\underline{u}=5$ and state
bounds $\overline{x}=-\underline{x}=[5,5]^\top$. Since the target is the
fixed origin rather than an exogenous signal, the policy is pure state
feedback, $z_t=x_t$. The stage loss penalizes the state, its increments, and
the input increments, together with the bound penalties of
\eqref{eq:dpc_loss},}
{\begin{equation}\label{eq:loss_vdp}
\begin{aligned}
\ell_{\mathrm{DPC}}={}&Q_r\|x_t\|_2^2
+Q_u\|u_{t-1}\|_2^2\\
&+Q_{du}\|u_{t-1}-u_{t-2}\|_2^2
+\ell_{\underline{x},t}+\ell_{\overline{x},t}.
\end{aligned}
\end{equation}}
{and $\ell_N\equiv0$.}
Initial conditions are sampled both on and off the limit cycle so that the
coefficients are not specialized to one region of the phase plane.

\begin{figure}[t]
    \centering
    \includegraphics[width=0.85\columnwidth]{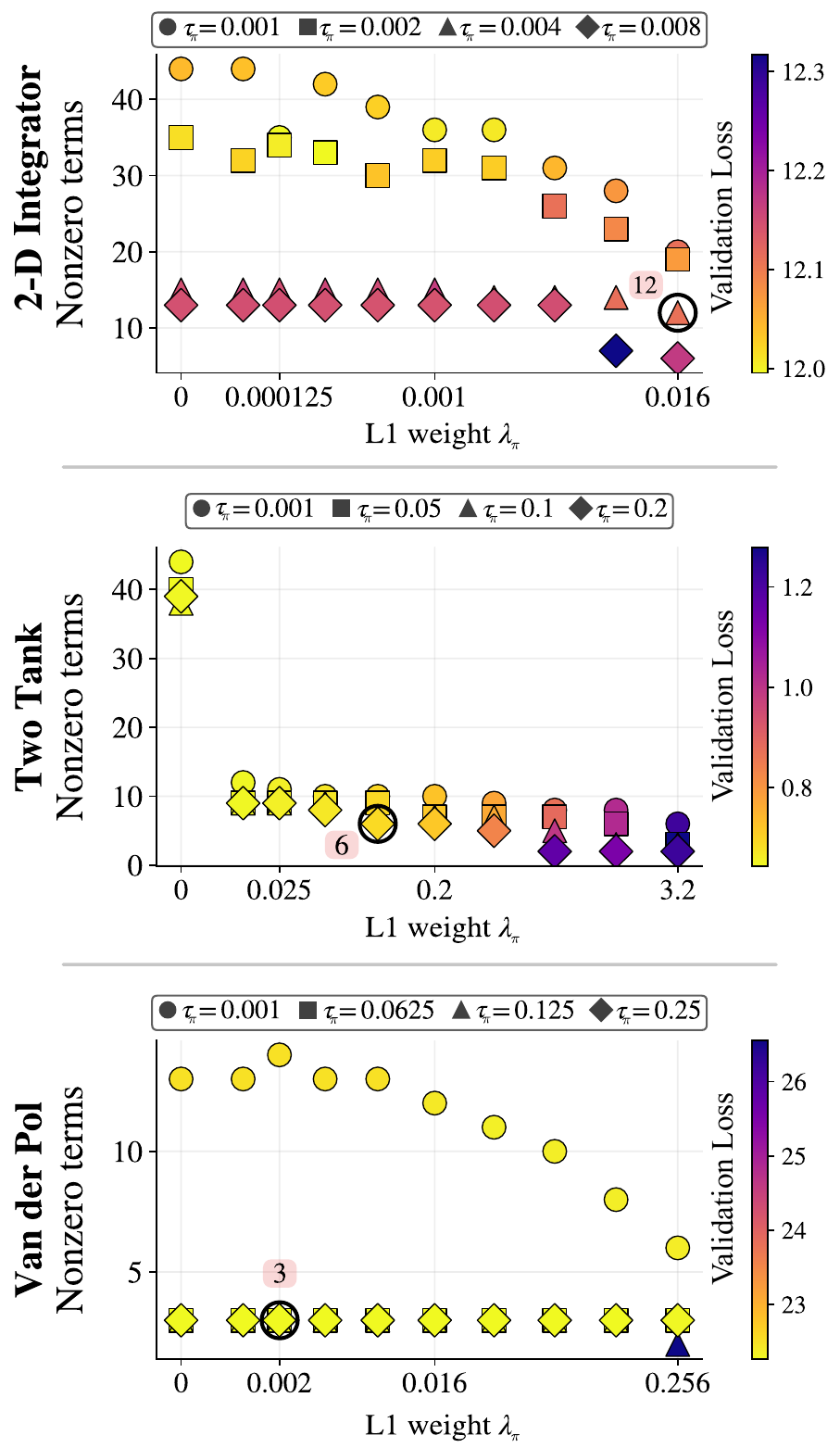}
    \caption{Policy sparsity versus $\ell_1$ weight $\lambda_{\pi}$. Marker
    shape denotes pruning threshold $\tau_\pi$, color shows the loss (excluding the $\ell_1$ penalty) on the
    validation scenarios used for selection, and black circles indicate the selected
    policies.}
    \label{fig:lambda-sparsity}
\end{figure}

\begin{figure*}[t]
    \centering
    \includegraphics[width=\textwidth]{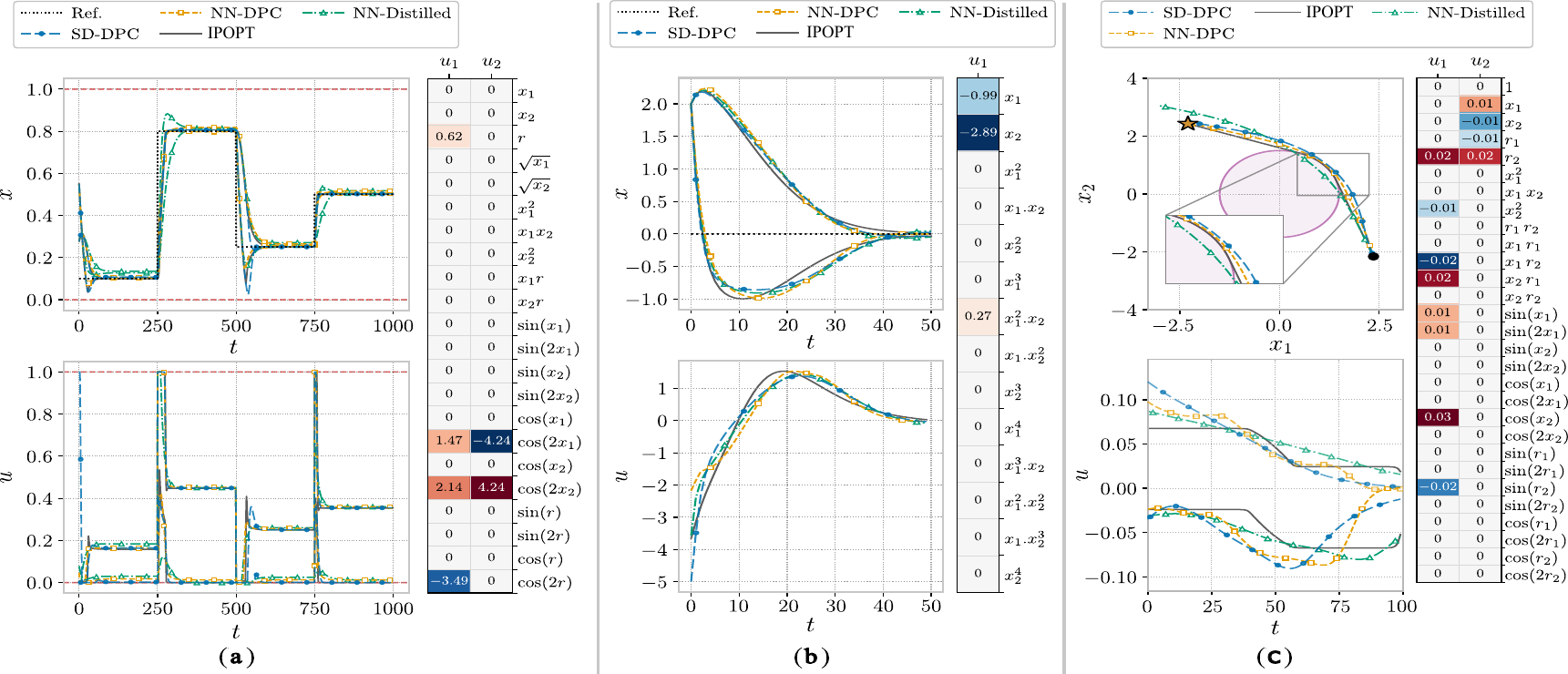}
    \caption{Closed-loop trajectories of SD-DPC, NN-DPC,
    NN-Distilled, and the IPOPT benchmark, with the retained SD-DPC policy coefficients, for (a)~the two-tank
    system, (b)~the forced Van der Pol oscillator, and (c)~the 2-D integrator
    with obstacle avoidance.}
    \label{fig:closed-loop-results}
\end{figure*}

%--------------------------------------------------------------------------
\subsection{Two-Dimensional Integrator with Obstacle Avoidance}
\label{subsec:obstacle}

The third case study demonstrates handling of nonlinear state constraints. We
consider a discrete-time integrator in the plane, in which each input sets the
increment of the corresponding position coordinate,
\begin{equation}\label{eq:dint}
x_{{t}+1}=A\,x_{{t}}+B\,u_{{t}},
\qquad
A=\begin{bmatrix} 1 & 0\\ 0 & 1\end{bmatrix},
\quad
B=\begin{bmatrix} 1 & 0\\ 0 & 1\end{bmatrix},
\end{equation}
with $x\in\mathbb R^2$ and $u\in\mathbb R^2$. The dynamics are linear and
decoupled, so the difficulty of this example lies entirely in the constraint,
which excludes a region of the state space,
\begin{equation}\label{eq:obstacle}
h_{\mathrm{obs}}(x)=b_o\,(x_{1}-{x^{o}_{1}})^2+(x_{2}-{x^{o}_{2}})^2-{(p_o/2)^2}\ \ge\ 0,
\end{equation}
where $p_o$, $b_o$, and {$x^{o}=(x^{o}_{1},x^{o}_{2})$} set the size, shape and center of the obstacle{, here $p_o=3$, $b_o=1$, and $x^{o}=(0,0)$, i.e., a circle of radius $1.5$ centered at the origin}.
The objective is to reach a terminal setpoint $r_N$ from initial conditions
placed so that the direct path is blocked{, with inputs bounded by
$\underline u\le u_t\le\overline u$. The stage and terminal losses are}
{\begin{equation}\label{eq:loss_obs}
\begin{aligned}
\ell_{\mathrm{DPC}}={}&Q_{dx}\|x_t-x_{t-1}\|^2+Q_u\|u_{t-1}\|^2\\
&+Q_{du}\|u_{t-1}-u_{t-2}\|^2+Q_h\big[-h_{\mathrm{obs}}(x_t)\big]_+,\\
\ell_N={}&Q_r\|x_N-r_N\|^2+Q_{u_f}\|\pi_{\Xi_\pi}(r_N,r_N)\|^2 .
\end{aligned}
\end{equation}}
The obstacle enters through the one-sided hinge
$[-h_{\mathrm{obs}}(x_t)]_+$, so the penalty grows linearly in the depth of
violation. The terminal term $Q_{u_f}\|\pi_{\Xi_\pi}(r_N,r_N)\|^2${,
which evaluates the policy at the target and therefore depends on $\Xi_\pi$
directly,} is required
because without it the policy holds a residual increment at the target and the
reaching task does not settle. {For this example, the IPOPT benchmark
instead minimizes the control effort $\sum_t\|u_t\|_1$ subject to the
terminal condition $x_N=r_N$ and the obstacle constraint, both imposed as hard
constraints, so that it reaches the setpoint exactly by construction.}

\begin{table}[!t]
\centering
\caption{Identification and policy training parameters.}
\label{tab:hyperparameters}
\footnotesize
\setlength{\tabcolsep}{2pt}
\renewcommand{\arraystretch}{1.05}
\begin{tabular*}{\columnwidth}{@{\extracolsep{\fill}} l r r r @{}}
\toprule
& \multicolumn{1}{c}{Two-Tank}
& \multicolumn{1}{c}{Van der Pol}
& \multicolumn{1}{c}{2-D Int.}\\
\midrule
\multicolumn{4}{@{}l}{\textbf{Identification}} \\[-1pt]
Trajectories $m_{\mathrm{id}}$   & $2000$              & $2000$            & $1500$ \\
Horizon $N_{\mathrm{id}}$        & $5$                 & $3$               & $3$ \\
Integrator                       & RK4                 & RK4               & discrete \\
Step $\Delta t$ [s]              & $1.0$               & $0.1$             & -- \\
Epochs $n_e$                     & $20\,000$           & $6000$            & $20\,000$ \\
Weights $(Q_1,Q_N)$              & $(1,2)$             & $(1,2)$           & $(1,2)$ \\
Sparsity $\lambda$               & $3\times10^{-6}$    & $1\times10^{-5}$  & $1\times10^{-9}$ \\
Threshold $\tau$                 & $5\times10^{-3}$    & $1\times10^{-2}$  & $1\times10^{-2}$ \\
Prune interval $n_p$             & $5000$              & $1000$            & $5000$ \\
Library size $p$                 & $15$                & $18$              & $19$ \\
\addlinespace[3pt]
\multicolumn{4}{@{}l}{\textbf{Policy}} \\[-1pt]
Scenarios $m$                    & ${2000}$        & $1000$            & $2000$ \\
Horizon $N$                      & $50$                & $50$              & $100$ \\
Sparsity $\lambda_\pi$           & $0.1$               & ${0.001}$     & $0.016$ \\
Threshold $\tau_\pi$             & $0.2$               & ${0.0625}$    & $0.004$ \\
Library size $p_\pi$             & $22$                & $14$              & $29$ \\
Leak $\epsilon$                  & ${0.01}$        & ${0.01}$      & ${0.05}$ \\
Band $b$                         & ${0.05}$        & ${0.05}$      & ${0.01}$ \\
Stage weights
& \makecell[r]{$Q_r={20}$\\$Q_x={20}$}
& \makecell[r]{$Q_r=20$, $Q_x=20$\\${Q_u=5}$, $Q_{du}=1$}
& \makecell[r]{$Q_r=1$, $Q_{dx}=1$\\$Q_u=Q_{du}=10$\\$Q_h=100$, $Q_{u_f}=5$}\\
\bottomrule
\end{tabular*}
\begin{minipage}{\columnwidth}
\footnotesize
\vspace{3pt}
\textit{Notes:} {The policy epoch budget is $12\,000$,} with the
{retained set pruned every $n_p=800$} epochs. Patience is $M=3$ throughout. Both stages use
AdamW \cite{loshchilov2017decoupled} under Algorithm~\ref{alg:TGD} with weight
decay $0.01${, which remains active during the debiasing refit of
$K={800}$ epochs.}
\end{minipage}
\end{table}
%--------------------------------------------------------------------------
\raggedbottom
\subsection{Training Setup}
\label{subsec:training}

In each case the prediction map $F_\Xi$ is identified first by solving
\eqref{eq:sindy-opt} and then held fixed while the policy coefficients are
learned; both stages are sparsified by Algorithm~\ref{alg:TGD}. The constants
for both stages are collected in Table~\ref{tab:hyperparameters}.
{The retained terms depend jointly on $\lambda_\pi$ and $\tau_\pi$; we
select the pair by the} sweep of Fig.~\ref{fig:lambda-sparsity}, retaining the sparsest policy whose
task performance on a separate set of validation scenarios is not degraded
relative to the dense end of the sweep; the evaluation scenarios in
Section~\ref{sec:results} are held out from this selection. The price of sparsity is system dependent: for the Van der Pol oscillator the loss
is flat across the sweep, whereas for the two-tank system it degrades once
$\lambda_\pi$ is pushed beyond the selected point.
%--------------------------------------------------------------------------
%--------------------------------------------------------------------------

\section{Results}
\label{sec:results}

\begin{table}[!t]
\centering
\caption{Controller evaluation across case studies.}
\label{tab:combined-controller-evaluation}
\scriptsize
\setlength{\tabcolsep}{0pt}
\renewcommand{\arraystretch}{1.1}
\begin{tabular*}{\columnwidth}{@{\extracolsep{\fill}} l r @{\hspace{5pt}} c r @{\hspace{4pt}} r @{}}
\toprule
Method
& \makecell[c]{Task loss\\$[10^{-3}]$}
& \makecell[c]{Control\\effort}
& \makecell[r]{Time $[\mu\mathrm{s}]$\\/\,memory}
& \makecell[r]{Violations\\$[\%]\,/\,$depth}\\
\midrule
\multicolumn{5}{@{}l}{\textbf{\textit{Two-Tank}}}\\
IPOPT        & $8.8\!\times\!10^{-6}$     & --             & $8444\,/\,26$\,MiB                        & $0.0\,/\,\text{--}$\\
NN-DPC       & $0.144\pm0.013$            & --             & $30\,/\,5.01$\,kB                         & $0.0\,/\,\text{--}$\\
SD-DPC       & $\mathbf{0.012\pm0.006}$   & --             & $\mathbf{17\,/\,0.047}$\,\textbf{kB}      & $0.0\,/\,\text{--}$\\
NN-Distilled & $0.209\pm0.039$            & --             & $17\,/\,0.047$\,kB                        & $0.0\,/\,\text{--}$\\
\addlinespace[2pt]
\multicolumn{5}{@{}l}{\textbf{\textit{Van der Pol}}}\\
IPOPT        & $4.6\!\times\!10^{-24}$    & --             & $301\,/\,19$\,MiB                         & $0.0\,/\,\text{--}$\\
NN-DPC       & $\mathbf{41.0\pm124.4}$    & --             & $29\,/\,5.01$\,kB                         & $0.0\,/\,\text{--}$\\
SD-DPC       & $49.9\pm126.4$             & --             & $\mathbf{17\,/\,0.023}$\,\textbf{kB}      & $0.0\,/\,\text{--}$\\
NN-Distilled & $96.8\pm183.8$             & --             & $17\,/\,0.023$\,kB                        & $0.0\,/\,\text{--}$\\
\addlinespace[2pt]
\multicolumn{5}{@{}l}{\textbf{\textit{2-D Integrator}}}\\
IPOPT        & $0.000\pm0.000$            & $12.00\pm1.18$ & $1841\,/\,18$\,MiB                        & $0.0\,/\,\text{--}$\\
NN-DPC       & $\mathbf{0.298\pm0.560}$   & $\mathbf{12.00\pm1.16}$ & $28\,/\,5.01$\,kB                & $0.0\,/\,\text{--}$\\
SD-DPC       & $4.75\pm4.79$              & $12.07\pm1.12$ & $\mathbf{17\,/\,0.094}$\,\textbf{kB}      & {$0.2\,/\,0.01$}\\
NN-Distilled & $62.6\pm58.0$              & $12.07\pm1.03$ & $17\,/\,0.094$\,kB                        & $60.8\,/\,1.61$\\
\bottomrule
\end{tabular*}
\par\vspace{3pt}
\parbox{\columnwidth}{\footnotesize\textit{Notes:} Mean $\pm$ standard
deviation over 500 held-out test scenarios simulated on the true plant; IPOPT
losses are numerically zero and given as means. Horizons are 1000, 50, and
100 steps; task loss is the MSE over the final 100 steps of each 250-step
reference segment, the final 10 steps, and the terminal state, respectively.
Control effort is $\sum_t\lVert u_t\rVert_1$. Time is per control step; for
IPOPT it is the solve time divided by $N$. Memory is the parameter storage for
the learned policies, and the resident memory added by each solve for IPOPT. Violations give the percentage of scenarios
with any state-constraint violation and the maximum violation depth, measured
by $[-h_{\mathrm{obs}}(x_t)]_+$. NN-Distilled uses the terms retained by
SD-DPC, fitted by unregularized least squares.}
\end{table}
Fig.~\ref{fig:closed-loop-results} and
Table~\ref{tab:combined-controller-evaluation} summarize the closed-loop
evaluation. The identified models used for policy training are accurate over the
prediction horizon: on held-out trajectories, their $N=50$ step rollout
mean-squared errors are $7.64\times10^{-10}$, $3.30\times10^{-8}$, and
$4.71\times10^{-7}$ for the two-tank, Van der Pol, and 2-D integrator systems,
respectively, with a maximum absolute state error below
$1.37\times10^{-3}$.
Across all three systems, the sparse policy evaluates in
$17\,\mu\mathrm{s}$ and occupies under $0.1\,\mathrm{kB}$, roughly two orders of magnitude less memory than NN-DPC {and far
below the $18$--$26$\,MiB allocated by each IPOPT solve}. Its closed-loop behavior, however, differs from task to task.

{Proposition~\ref{prop:lipschitz} is evaluated on the levels
$x\in[0.05,1]^2$ and references $r\in[0.1,0.9]$ for the two-tank system and,
for the other two systems, on a box enclosing all evaluation trajectories.
Weighting each retained coefficient by the slope of its library function,
\eqref{eq:Lpi} certifies $L_\pi\le18.4$, $16.0$, and $0.32$, whereas the
largest $\|\partial_x\pi\|$, computed by automatic differentiation on a dense
grid over these sets, is $12.0$, $9.4$, and $0.19$, so the certificate is
conservative by a factor of at most $1.7$; the pruning bound (ii) also held
at every prune of Algorithm~\ref{alg:TGD}.}

\raggedbottom
\subsection{Two-Tank System}
For the two-tank system, SD-DPC retains six terms across the two input
channels (Fig.~\ref{fig:closed-loop-results}(a)), tracks each reference
segment, and satisfies the level constraints on every scenario. Its tracking loss is about twelve times lower than that of NN-DPC, a gap well
beyond the scenario-to-scenario spread of either policy, so the sparse
parameterization costs nothing in accuracy here; both learned policies remain
well above the {IPOPT} benchmark, which optimizes each scenario online. The
distilled policy, which uses the same six terms, also satisfies the
constraints, but its coefficients, fitted to the actions of NN-DPC, leave a steady offset, and its tracking loss is
about seventeen times that of SD-DPC.

The raw outputs of the learned SD-DPC pump and valve channels are

\begin{equation}
\begin{aligned}
v_1 &= 0.6198\,r + 1.4659\cos 2x_1 + 2.1415\cos 2x_2\\
&\quad - 3.4944\cos 2r,\\
v_2 &= -4.2401\cos 2x_1 + 4.2429\cos 2x_2,
\end{aligned}
\label{eq:twotank-policy}
\end{equation}
with $u=\Pi_{[0,1]^2}(v)$ and $r$ the level reference. The pump command
combines a reference-dependent feedforward with feedback on both levels,
while the two nearly equal valve coefficients make the valve open only when
the upper tank is fuller than the lower one.

\subsection{Forced Van der Pol Oscillator}
In the Van der Pol case, all controllers stabilize the origin and satisfy the
constraints on every scenario. The benchmark optimizes each scenario individually and therefore reaches
   a loss below $10^{-5}$, whereas both learned policies settle more slowly,
   with a final-10-step loss of about $4$--$5\times10^{-2}$. The loss of
   SD-DPC is comparable to that of NN-DPC, whereas the distilled policy on
   the same terms reaches about twice that of SD-DPC. What the sparse parameterization adds here is structure.
The policy found by SD-DPC (Fig.~\ref{fig:closed-loop-results}(b)) retains
only three terms,
\begin{equation}
\begin{aligned}
    v(x) &= \kappa_1 x_1 + \kappa_2 x_2 + \kappa_3 x_1^2 x_2,\\
    \kappa &= (-0.9873,\, -2.8860,\, 0.2716),
\end{aligned}
\label{eq:vdp-policy}
\end{equation}
with $u=\Pi_{[-5,5]}(v)$, and is therefore simple enough for {direct
analysis (Remark~\ref{rem:vdp}).}

{\begin{remark}\label{rem:vdp}
Because $F_\Xi$ and $\pi_{\Xi_\pi}$ are finite expansions over explicit
analytic libraries, the learned closed loop is available in closed form and
admits direct analysis in specific cases. For the policy
\eqref{eq:vdp-policy} applied to \eqref{eq:vdp} without saturation,
$V=\tfrac12(1-\kappa_1)x_1^2+\tfrac12x_2^2$ yields
$\dot V=(\mu+\kappa_2)x_2^2-(\mu-\kappa_3)x_1^2x_2^2\le0$ whenever
$\kappa_1<1$, $\kappa_2<-\mu$, and $\kappa_3\le\mu$, conditions satisfied by
the learned coefficients, so that the origin of this continuous-time closed loop is globally asymptotically stable \cite[Cor.~4.2]{Khalil2002}.
\end{remark}}

\subsection{Two-Dimensional Integrator with Obstacle Avoidance}
This case study shows why the policy must be trained in closed loop, with the
constraints enforced during optimization. SD-DPC retains twelve terms
(Fig.~\ref{fig:closed-loop-results}(c)), steers around the obstacle, and
reaches the setpoint from every sampled initial condition, with cumulative
control effort within $0.6\%$ of {the minimum attained by} IPOPT. SD-DPC and NN-DPC are both
trained through closed-loop rollouts that include the obstacle penalty. {Both satisfy the constraint, except in one SD-DPC scenario (depth $0.01$).} The terminal error of SD-DPC is higher than that of NN-DPC, the price of
a twelve-term parameterization, remains well within the reaching task. The distilled policy violates the
obstacle constraint in $60.8\%$ of the scenarios, with a maximum violation
depth of $1.61$, and its terminal error is about thirteen times that of SD-DPC.
Its coefficients are regressed onto the teacher's actions without the
obstacle penalty, {Proposition~\ref{prop:lipschitz}(iii), applied with $\pi$ the student
and $\tilde\pi$ the teacher on the same plant ($\delta=0$), explains why this
gives no safety guarantee. The imitation error
$\|\pi(x)-\tilde\pi(x)\|$ is at most $0.08$ on the states visited by the
teacher, where the regression data lie, but reaches $0.35$ on those visited
by the student, so $\varepsilon_\pi\ge0.35$. For Proposition~\ref{prop:lipschitz}(iii) to certify that the student
inherits the teacher's constraint satisfaction, the teacher would need a margin
$h_{\mathrm{obs}}\ge L_he_t\ge L_hL_u\varepsilon_\pi\ge1.05$, since $L_u=1$
for the integrator and $L_h\ge3$ outside the obstacle, where
$\|\nabla h_{\mathrm{obs}}\|=2\|x-x^o\|\ge3$. The near-optimal teacher, however,
clears the obstacle by only $h_{\mathrm{obs}}=0.06$, so no margin is left to absorb the imitation error.}

\section{Conclusions}
\label{sec:conclusions}
This work presented SD-DPC, a framework in which both the prediction model and
the feedback policy are sparse combinations of candidate functions, learned by
the same thresholded gradient-descent procedure, so that the policy terms
are chosen by the constrained closed-loop objective rather than by imitating a
pretrained controller. Across three case studies, the learned SD-DPC policies retained only a few terms,
satisfied the constraints in {almost} all test scenarios, and {reduced the task
loss of a policy distilled onto the same terms by up to an order of
magnitude, whereas the distilled policy violated the obstacle constraint.} The
closed-form policies require about two orders of magnitude less memory than
deep neural policies and {far less memory and evaluation time than the
optimization-based benchmark}. The sparse policies also
admit {explicit sensitivity bounds and, in specific cases, direct
stability analysis, as shown for the Van der Pol oscillator}. Constraint satisfaction is currently promoted through
penalty terms and verified empirically. Future work will focus on exploiting the symbolic structure of SD-DPC for stronger closed-loop guarantees. This includes hard feasibility through safety filters, systematic region-of-attraction certification, robustness bounds under identification error, and sufficient conditions under which sparsification preserves stability and feasibility.
We also plan to address noisy data and library mismatch, task-informed libraries, and online adaptation of the retained coefficients after deployment.

%%%%%%%%%%%%%%%%%%%%%%%%%%%%%%%%%%%%%%%%%%%%%%%%%%%%%%%%%%%%%%%%%%%%%%%%%%%%%%%%
% REFERENCES
%%%%%%%%%%%%%%%%%%%%%%%%%%%%%%%%%%%%%%%%%%%%%%%%%%%%%%%%%%%%%%%%%%%%%%%%%%%%%%%%

\begingroup
\makeatletter
\let\originalthebibliography\thebibliography
\renewcommand{\thebibliography}[1]{%
  \originalthebibliography{#1}%
  \setlength{\itemsep}{-0.6pt}%
  \interlinepenalty=0%
  \clubpenalty=0%
  \widowpenalty=0%
}
\makeatother

\bibliographystyle{IEEEtran}
\bibliography{IEEEabrv,autosam_ieee}

@IEEEtranBSTCTL{IEEEexample:BSTcontrol,
  CTLdash_repeated_names = "no"
}

@STRING{IEEE_J_SMCS  = "{IEEE} Trans. Syst., Man, Cybern. Syst."}

@STRING{IEEE_J_LCSS  = "{IEEE} Control Syst. Lett."}

@STRING{IEEE_J_CYB   = "{IEEE} Trans. Cybern."}

@article{Bemporad2002,
  author  = {Bemporad, Alberto and Morari, Manfred and Dua, Vivek and Pistikopoulos, Efstratios N.},
  title   = {The explicit linear quadratic regulator for constrained systems},
  journal = {Automatica},
  volume  = {38},
  number  = {1},
  pages   = {3--20},
  year    = {2002}
}

@article{Brunton2016a,
  author  = {Brunton, Steven L. and Proctor, Joshua L. and Kutz, J. Nathan},
  title   = {Discovering governing equations from data by sparse identification of nonlinear dynamical systems},
  journal = {Proc. Natl. Acad. Sci. USA},
  volume  = {113},
  number  = {15},
  pages   = {3932--3937},
  year    = {2016}
}

@article{brunton2016sparse,
  title={Sparse identification of nonlinear dynamics with control ({SINDYc})},
  author={Brunton, Steven L and Proctor, Joshua L and Kutz, J Nathan},
  journal={IFAC-PapersOnLine},
  volume={49},
  number={18},
  pages={710--715},
  year={2016},
  publisher={Elsevier}
}

@article{Drgona2022,
  author  = {Drgo{\v{n}}a, J{\'a}n and Ki{\v{s}}, Karol and Tuor, Aaron and Vrabie, Draguna and Klau{\v{c}}o, Martin},
  title   = {Differentiable predictive control: Deep learning alternative to explicit model predictive control for unknown nonlinear systems},
  journal = {J. Process Control},
  volume  = {116},
  pages   = {80--92},
  year    = {2022}
}

@article{Drgona2024,
  author  = {Drgo{\v{n}}a, J{\'a}n and Tuor, Aaron and Vrabie, Draguna},
  title   = {Learning constrained parametric differentiable predictive control policies with guarantees},
  journal = IEEE_J_SMCS,
  volume  = {54},
  number  = {6},
  pages   = {3596--3607},
  year    = {2024}
}

@article{Fasel2022,
  author  = {Fasel, Urban and Kutz, J. Nathan and Brunton, Bingni W. and Brunton, Steven L.},
  title   = {Ensemble-{SINDy}: Robust sparse model discovery in the low-data, high-noise limit, with active learning and control},
  journal = {Proc. R. Soc. A},
  volume  = {478},
  number  = {2260},
  pages   = {20210904},
  year    = {2022}
}

@article{GarciaPrett1989,
  title={Model predictive control: Theory and practice—A survey},
  author={Garcia, Carlos E and Prett, David M and Morari, Manfred},
  journal={Automatica},
  volume={25},
  number={3},
  pages={335--348},
  year={1989},
  publisher={Elsevier}
}

@article{fazlyab2019efficient,
  title={Efficient and accurate estimation of lipschitz constants for deep neural networks},
  author={Fazlyab, Mahyar and Robey, Alexander and Hassani, Hamed and Morari, Manfred and Pappas, George},
  journal={Advances in Neural Information Processing Systems},
  volume={32},
  year={2019}
}

@article{Hertneck2018,
  author  = {Hertneck, Michael and K{\"o}hler, Johannes and Trimpe, Sebastian and Allg{\"o}wer, Frank},
  title   = {Learning an approximate model predictive controller with guarantees},
  journal = IEEE_J_LCSS,
  volume  = {2},
  number  = {3},
  pages   = {543--548},
  year    = {2018}
}

@article{Kaiser2018,
  author  = {Kaiser, Eurika and Kutz, J. Nathan and Brunton, Steven L.},
  title   = {Sparse identification of nonlinear dynamics for model predictive control in the low-data limit},
  journal = {Proc. R. Soc. A},
  volume  = {474},
  number  = {2219},
  pages   = {20180335},
  year    = {2018}
}

@article{Karg2020,
  author  = {Karg, Benjamin and Lucia, Sergio},
  title   = {Efficient representation and approximation of model predictive control laws via deep learning},
  journal = IEEE_J_CYB,
  volume  = {50},
  number  = {9},
  pages   = {3866--3878},
  year    = {2020}
}

@inproceedings{Landajuela2021,
  title={Discovering symbolic policies with deep reinforcement learning},
  author={Landajuela, Mikel and Petersen, Brenden K and Kim, Sookyung and Santiago, Claudio P and Glatt, Ruben and Mundhenk, Nathan and Pettit, Jacob F and Faissol, Daniel},
  booktitle={International Conference on Machine Learning},
  pages={5979--5989},
  year={2021},
  organization={PMLR}
}

@article{Mayne2000,
  author  = {Mayne, David Q. and Rawlings, James B. and Rao, Christopher V. and Scokaert, Pierre O. M.},
  title   = {Constrained model predictive control: Stability and optimality},
  journal = {Automatica},
  volume  = {36},
  number  = {6},
  pages   = {789--814},
  year    = {2000}
}

@article{Messenger2021,
  author  = {Messenger, Daniel A. and Bortz, David M.},
  title   = {Weak {SINDy}: {Galerkin}-based data-driven model selection},
  journal = {Multiscale Model. Simul.},
  volume  = {19},
  number  = {3},
  pages   = {1474--1497},
  year    = {2021}
}

@article{Schwenzer2021,
  title={Review on model predictive control: An engineering perspective},
  author={Schwenzer, Max and Ay, Muzaffer and Bergs, Thomas and Abel, Dirk},
  journal={The international journal of advanced manufacturing technology},
  volume={117},
  number={5},
  pages={1327--1349},
  year={2021},
  publisher={Springer}
}

@article{LiHuaCao2022,
author  = {Li, Yun and Hua, Kaixun and Cao, Yankai},
title   = {Using stochastic programming to train neural network
approximation of nonlinear {MPC} laws},
journal = {Automatica},
volume  = {146},
pages   = {110665},
year    = {2022},
doi     = {10.1016/j.automatica.2022.110665}
}

@article{Kaheman2022,
  title={Automatic differentiation to simultaneously identify nonlinear dynamics and extract noise probability distributions from data},
  author={Kaheman, Kadierdan and Brunton, Steven L and Nathan Kutz, J},
  journal={Machine Learning: Science and Technology},
  volume={3},
  number={1},
  pages={015031},
  year={2022},
  publisher={IOP Publishing}
}

@article{ZhangTangHe2022,
  title={Learning chaotic systems from noisy data via multi-step optimization and adaptive training},
  author={Zhang, Lei and Tang, Shaoqiang and He, Guowei},
  journal={Chaos: An Interdisciplinary Journal of Nonlinear Science},
  volume={32},
  number={12},
  year={2022},
  publisher={AIP Publishing}
}

@article{Zolman2025,
  author  = {Zolman, Nicholas and Lagemann, Christian and Fasel, Urban and Kutz, J. Nathan and Brunton, Steven L.},
  title   = {{SINDy-RL} for interpretable and efficient model-based reinforcement learning},
  journal = {Nat. Commun.},
  volume  = {16},
  pages   = {10714},
  year    = {2025}
}

@article{Neuromancer2023,
  title={{NeuroMANCER: Neural Modules with Adaptive Nonlinear Constraints and Efficient Regularizations}},
  author={Drgona, Jan and Tuor, Aaron and Koch, James and Shapiro, Madelyn and Jacob, Bruno and Vrabie, Draguna},
  Url= {https://github.com/pnnl/neuromancer}, 
  year={2023}
}

@article{loshchilov2017decoupled,
  title={Decoupled weight decay regularization},
  author={Loshchilov, Ilya and Hutter, Frank},
  journal={arXiv preprint arXiv:1711.05101},
  year={2017}
}

@inproceedings{Guo2023,
  author    = {Guo, Jiaming and Zhang, Rui and Peng, Shaohui and Yi, Qi and Hu, Xing and Chen, Ruizhi and Du, Zidong and Zhang, Xishan and Li, Ling and Guo, Qi and Chen, Yunji},
  title     = {Efficient symbolic policy learning with differentiable symbolic expression},
  booktitle = {Advances in Neural Information Processing Systems},
  volume    = {36},
  pages     = {36278--36304},
  year      = {2023}
}

@inproceedings{Hose2024,
  author    = {Hose, Henrik and Gr{\"a}fe, Alexander and Trimpe, Sebastian},
  title     = {Parameter-adaptive approximate {MPC}: Tuning neural-network controllers without retraining},
  booktitle = {Proc. 6th Annual Learning for Dynamics \& Control Conference},
  series    = {Proceedings of Machine Learning Research},
  volume    = {242},
  pages     = {349--360},
  year      = {2024}
}

@article{wachter2006implementation,
  author  = {W{\"a}chter, Andreas and Biegler, Lorenz T.},
  title   = {On the implementation of an interior-point filter line-search algorithm for large-scale nonlinear programming},
  journal = {Math. Program.},
  volume  = {106},
  number  = {1},
  pages   = {25--57},
  year    = {2006}
}

@Article{Andersson2019,
  author = {Joel A E Andersson and Joris Gillis and Greg Horn
            and James B Rawlings and Moritz Diehl},
  title = {{CasADi} -- {A} software framework for nonlinear optimization
           and optimal control},
  journal = {Mathematical Programming Computation},
  volume = {11},
  number = {1},
  pages = {1--36},
  year = {2019},
  publisher = {Springer},
  doi = {10.1007/s12532-018-0139-4}
}

@article{Champion2019,
  author  = {Champion, Kathleen and Lusch, Bethany and Kutz, J. Nathan and Brunton, Steven L.},
  title   = {Data-driven discovery of coordinates and governing equations},
  journal = {Proc. Natl. Acad. Sci. USA},
  volume  = {116},
  number  = {45},
  pages   = {22445--22451},
  year    = {2019},
  doi     = {10.1073/pnas.1906995116}
}

@inproceedings{Botteghi2024,
  author    = {Botteghi, Nicol{\`o} and Fasel, Urban},
  title     = {Parametric {PDE} control with deep reinforcement learning and $\ell_0$ sparse polynomial policies},
  booktitle = {Proc. {IEEE} 63rd Conference on Decision and Control ({CDC})},
  pages     = {6108--6115},
  year      = {2024},
  doi       = {10.1109/CDC56724.2024.10886215}
}

@article{Kunisch2023,
  title={Learning optimal feedback operators and their sparse polynomial approximations},
  author={Kunisch, Karl and V{\'a}squez-Varas, Donato and Walter, Daniel},
  journal={Journal of Machine Learning Research},
  volume={24},
  number={301},
  pages={1--38},
  year={2023}
}

@misc{Bengio2013,
  author       = {Bengio, Yoshua and L{\'e}onard, Nicholas and Courville, Aaron},
  title        = {Estimating or propagating gradients through stochastic neurons for conditional computation},
  howpublished = {arXiv preprint arXiv:1308.3432},
  year         = {2013},
  url          = {https://arxiv.org/abs/1308.3432}
}

@book{Khalil2002,
  author    = {Khalil, Hassan K.},
  title     = {Nonlinear Systems},
  edition   = {Third},
  publisher = {Prentice Hall},
  address   = {Upper Saddle River, NJ},
  year      = {2002}
}
\endgroup
\end{document}